\newtheorem{thm}{Theorem}[section]
\theoremstyle{remark}
\newtheorem{defn}[thm]{Definition}
\newtheorem{rem}[thm]{Remark}
\newtheorem{conj}[thm]{Conjecture}
\newtheorem{conv}[thm]{Convention}
\DeclareMathOperator{\tr }{tr}
\DeclareMathOperator{\A}{Area}
\renewcommand{\S}{S_{\rm gen}}
\author[1]{Raphael Bousso}
\author[2]{and Arvin Shahbazi-Moghaddam}
\affiliation[1]{Berkeley Center for Theoretical Physics and Department of Physics,\\
University of California, Berkeley, CA 94720, USA} 
\affiliation[2]{Stanford Institute for Theoretical Physics,\\ Stanford University, Stanford, CA 94305, USA}
\emailAdd{bousso@berkeley.edu}
\emailAdd{arvinshm@gmail.com}
\title{Quantum Singularities}
\abstract{Two spatial regions $B$ and $R$ are \emph{hyperentangled} if the generalized entropy satisfies $\S^{B\cup R}<\S^R$. If in addition all future (or all past) directed inward null shape deformations of $B$ decrease $\S^{B\cup R}$, then we show that the causal development of $B$, with $R$ held fixed, must be incomplete. This result eliminates the Null Energy Condition from the assumptions of a recently proven singularity theorem. Instead, we assume a quantum version of the Bousso bound. 

Taking $R$ to contain the Hawking radiation after the Page time, our theorem predicts a singularity in the \emph{past} causal development of the black hole interior. This is surprising because the classical spacetime is nonsingular in the past. However, one finds that Cauchy slices that are required to contain $R$ do not remain in the semiclassical regime. The \emph{quantum singularities} predicted by our theorem are an obstruction to further semiclassical evolution, generalizing the singularities of classical general relativity.}
\begin{document}
\maketitle

\section{Introduction}
\label{intro}

A spacetime $M$ is singular if it contains an incomplete timelike or null geodesic~\cite{wald2010general} (an inextendible geodesic of finite affine length). Physically relevant examples include the past singularity in certain cosmological solution---the ``big bang''---and the future singularity that terminates time evolution inside a Schwarzschild black hole. 

Singularities are generic in classical general relativity. A theorem by Penrose~\cite{Penrose:1964wq} guarantees that at least one of the null geodesics orthogonal to a trapped surface is incomplete. A surface is trapped if both sets of future-directed orthogonal null geodesics have negative expansion.

Penrose's theorem requires two crucial assumptions about the spacetime: $M$ must admit a noncompact Cauchy surface; and $M$ must satisfy the Null Curvature Condition, $R_{ab} k^a k^b\geq 0$, where $R_{ab}$ is the Ricci tensor and $k^a$ is any null vector. 

A recent result~\cite{Bousso:2022cun} has established a connection between singularities and quantum information: the noncompactness assumption can be eliminated from Penrose's theorem, if instead the spacetime is assumed to satisfy the Bousso bound~\cite{Bousso:1999xy} on the entropy of matter. 

\begin{figure}
\begin{center}
\includegraphics[width=0.9\textwidth]{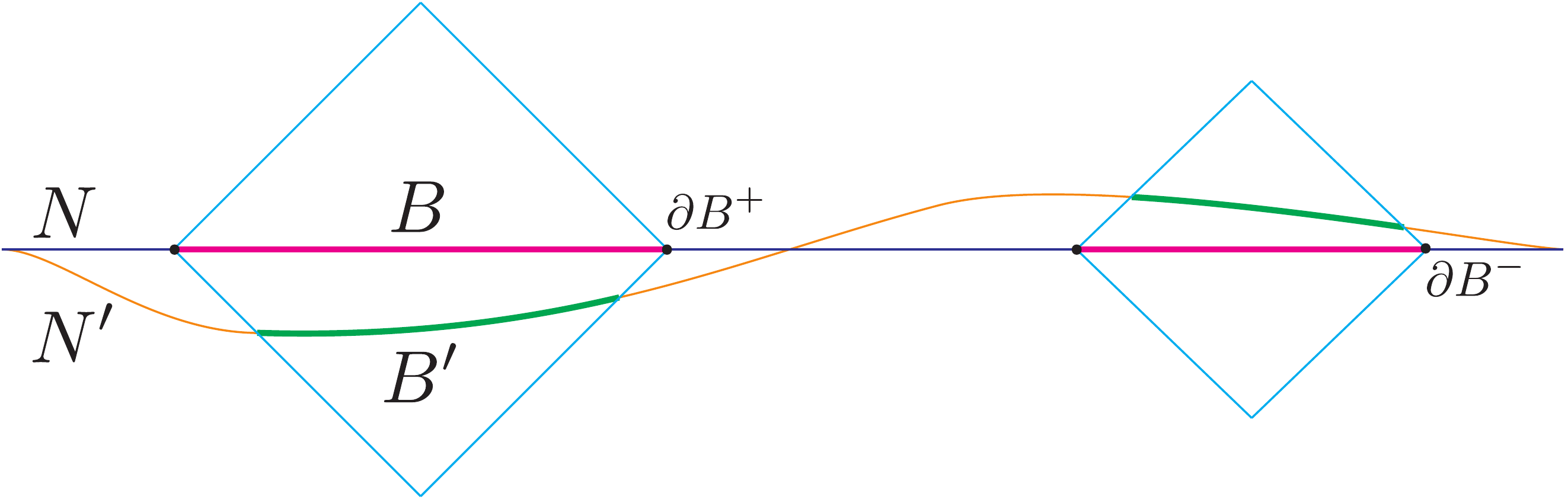}
\end{center}
\caption{\label{fig-bound}The quantum Bousso bound (Conjecture \ref{qbb}): if the quantum expansion at $\partial B$ in the direction of $B'$ is nonpositive then $\S^{B'}\leq \S^B$.}
\end{figure}
The Null Curvature Condition can also be eliminated. This is important, because it it is known not to hold in Nature. By Einstein's equation, it is equivalent to the Null Energy Condition, that $T_{ab} k^a k^b\geq 0$, where $T_{ab}$ is the stress tensor. Any relativistic quantum field theory, such as the Standard Model, contains states in which the expectation value of the stress tensor, $\braket{T_{ab}}$, violates this condition~\cite{Epstein:1965zza}. Wall~\cite{Wall:2010jtc} eliminated the Null Curvature Condition from Penrose's theorem, by assuming instead that the Generalized Second Law (GSL) holds in $M$. The GSL is the statement that the generalized entropy---the sum of horizon area and von Neumann entropy of the matter fields outside a causal horizon---cannot decrease. A causal horizon is the boundary of the past of a timelike or null curve of infinite affine length; examples include black hole, Rindler, and de Sitter horizons. Unlike for the NCC, there is no known counterexample to the GSL. There is considerable evidence for its validity, and it has been proven to hold on Killing horizons~\cite{Wall:2011hj}.

In this paper, we combine the advances of Refs.~\cite{Bousso:2022cun, Wall:2010jtc}, using a single assumption, a quantum refinement of the Bousso bound~\cite{Bousso:2015mna}. This bound says that if the generalized entropy outside a Cauchy-splitting null hypersurface $L$ is decreasing towards the future (resp.\ past) at some moment of time, then it must be lower at all future (past) times. See Fig.~\ref{fig-bound}; and see Conj.~\ref{qbb} below for a more precise statement. The quantum Bousso bound implies the GSL as a special case.

\begin{figure}
\begin{center}
\includegraphics[width=0.85\textwidth]{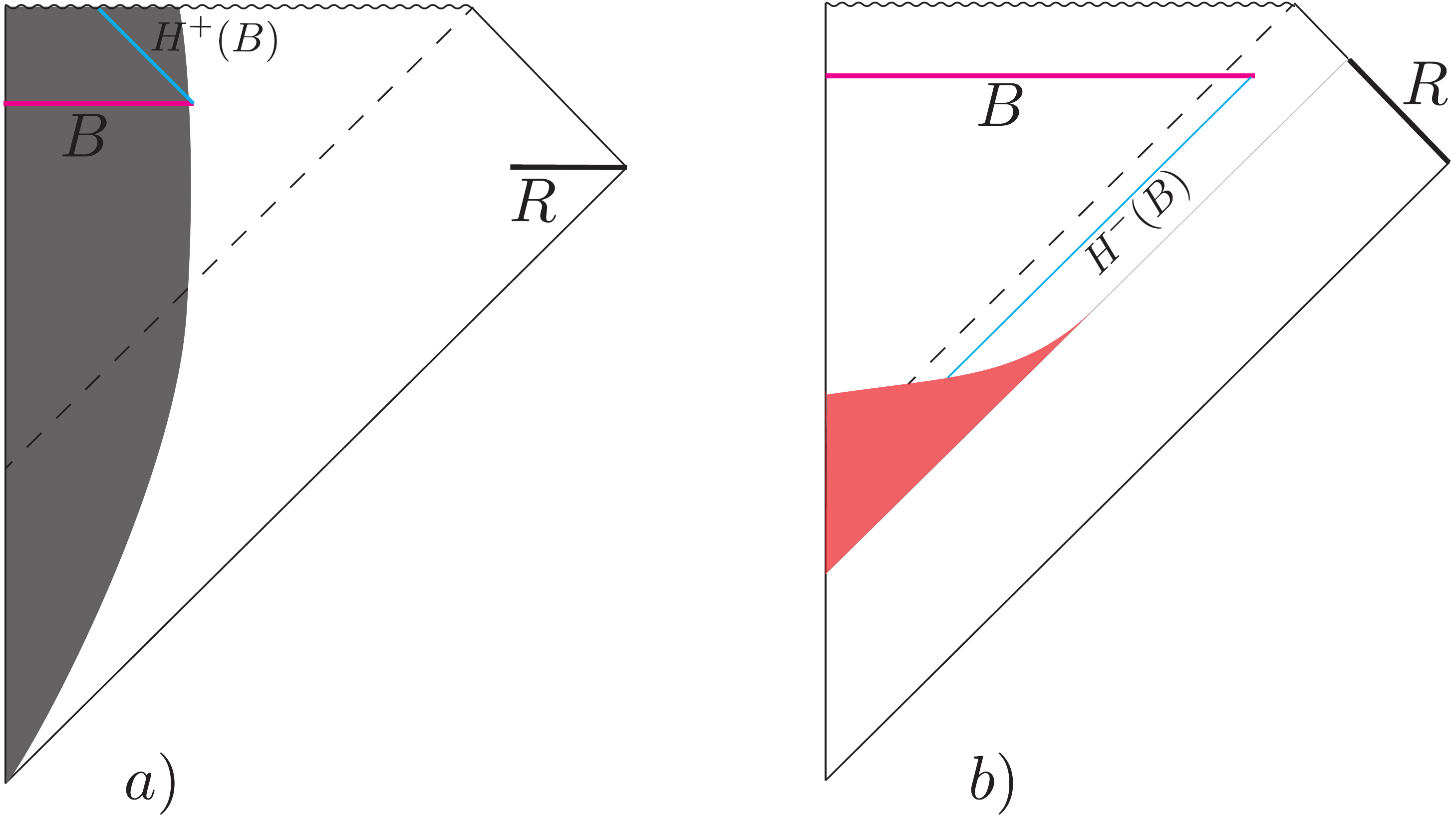}
\end{center}
\caption{\label{fig-theorem}Quantum singularity theorem for a hyperentangled region $B$. (a) Collapsing star entangled with a distant reference system. At late times the entanglement entropy exceeds the surface area of the star and the theorem predicts a future singularity. This is the (ordinary) singularity of the black hole. (b) Evaporating black hole after the Page time. The black hole interior $B$ and distant Hawking radiation $R$ are hyperentangled, and the theorem predicts both a future and a past singularity. The latter is a quantum singularity: it arises from the breakdown of semiclassical evolution in the red region when $R$ is held fixed.}
\end{figure}
We will prove that the quantum Bousso bound implies a singularity theorem for certain hyperentangled regions, Theorem~\ref{qst} below. We call a spatial region $B$ hyperentangled if 
\begin{equation}
    \S^{B\cup R}<\S^R
\end{equation}
for some spacelike-separated region or external system $R$, where $\S$ is the generalized entropy. Suppose that, in addition, $B\cup R$ has negative inward quantum expansion at $\partial B$; that is, $\S^{B\cup R}$ decreases under shape deformations of $B$ along a past-directed (or a future-directed) ingoing null congruence. See Fig.~\ref{fig-theorem}. 

Under these assumptions, we prove that at least one null geodesic in the congruence is incomplete, \emph{in any spacetime obtained from semiclassical evolution on Cauchy surfaces that all contain $R$.}

We shall see through the study of examples that the quantum Bousso bound, and indeed the GSL, are surprisingly restrictive when Cauchy evolution is limited to slices containing $R$. As a result, spacetimes that satisfy this bound admit a novel, $R$-dependent notion of singularity. Our theorem captures such singularities. We believe that the notion of $R$-dependent singularities in semiclassical gravity has not been discussed in the literature, so we will do so now. 

\begin{figure}
\begin{center}
\includegraphics[width=0.4\textwidth]{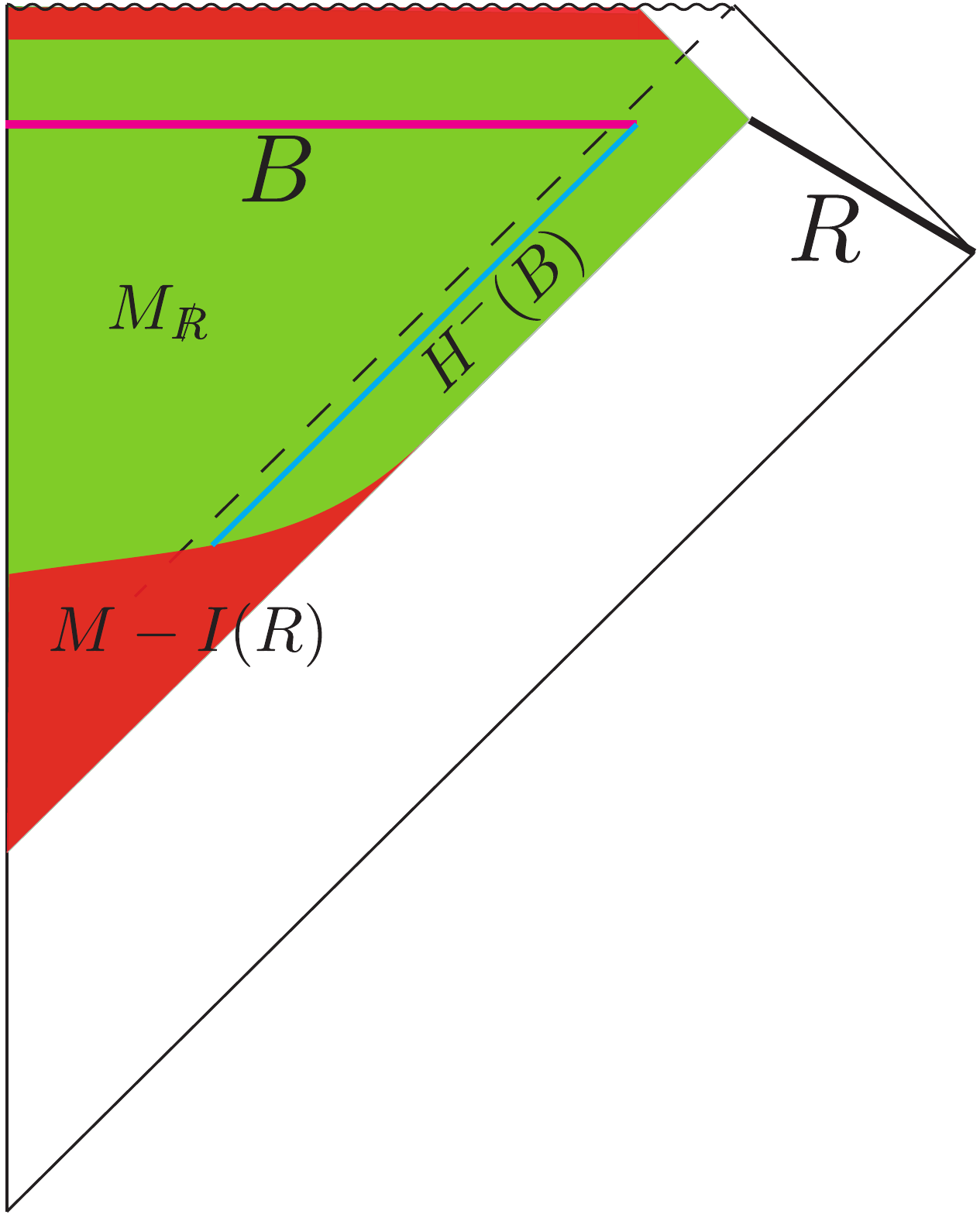}
\end{center}
\caption{\label{fig-mrir}$M_{\slashed{R}}$ (green) is a spacetime covered by nice slices that contain $R$. In general this semiclassical region is a proper subset of the region $M-I(R)$ (red+green) covered classically by Cauchy slices that contain $R$. Generators of $H^\pm(B)$ may be incomplete in $M_{\slashed{R}}$ even if they would be complete in $M-I(R)$. This reflects a real obstruction to further semiclassical evolution which we call a \emph{quantum singularity}.}
\end{figure}
In classical General Relativity, the inclusion of $R$ in all Cauchy slices would be a trivial restriction for the purposes of our theorem. As shown in Fig.~\ref{fig-mrir}, slices that contain $R$ foliate $M-I(R)$, where $I(R)$ is the union of the chronological past and future of $R$. The Cauchy horizon $H^\pm(B)$ is a subset of $M-I(R)$, so if it contains an incomplete geodesic, then so does $M$.


However, in semiclassical gravity, Cauchy slices must be ``nice.'' That is, the data on each slice must be compatible the validity of an effective field theory description with a cutoff below the Planck scale.\footnote{We are not aware of a first-principles derivation of the precise inequalities that ``niceness'' implies for scalar quantities extracted from the initial data on a slice. A plausible set of criteria was outlined in Ref.~\cite{Lowe:1995ac,Polchinski:1995ta}. To understand the quantum singularities predicted by our theorem in the examples we study here, we need only require a rather weak niceness condition on the trace of the extrinsic curvature; see Sec.~\ref{evap}.} The boundary of $M-I(R)$ is null; therefore, it contains distinct points with zero spatial distance, and nice slices that contain $R$ cannot approach it arbitrarily closely; see Fig.~\ref{fig-mrir}. 

Thus, only a subset $M_{\slashed{R}}\subset M-I(R)$ can be obtained by semiclassical evolution along Cauchy surfaces that all contain $R$. Hence it is possible for $H^\pm(B)$ to contain a geodesic that is incomplete in any semiclassically allowed spacetime $M_{\slashed{R}}$, even though it would be complete in the (larger but semiclassically unattainable) $M-I(R)$. We call a singularity that arises in this manner a \emph{quantum singularity}. 

\paragraph{Outline} In Sec.~\ref{semicl}, we define a semiclassical spacetime as a globally hyperbolic manifold $M$ with metric $g_{ab}$ and quantum state $\rho$ whose stress tensor expectation value satisfies the Einstein equation, $G_{ab}= 8\pi G\braket{T_{ab}}$. In addition, $M$ must admit a ``nice slicing.'' That is, time evolution must be consistent with the validity of an effective field theory description, with an ultraviolet cutoff $\Lambda\ll M_P$, where $M_P$ is the Planck mass. We formulate the main assumption of our theorem, the quantum Bousso bound, and we clarify that it applies to nice slices only.\footnote{We also comment on Ref.~\cite{Rolph:2022csa}, which asserts a different regime of validity of the quantum Bousso bound (and, implicitly, of the GSL), and which arrives at conclusions different from ours.}

After proving our theorem in Sec.~\ref{sec-qst}, we discuss two nontrivial applications. In Sec.~\ref{pastofb}, we consider an evaporating black hole formed from regular initial conditions. We apply our theorem to the black hole interior $B$ after the Page time; $R$ is the Hawking radiation emitted so far. In this case a singularity is predicted along the black hole horizon to the \emph{past} of $B$. 

This is a quantum singularity. It appears because we are holding $R$ fixed, thus excluding the region $R$ from participating in the semiclassical evolution. On nice slices that contain the Hawking radiation, the horizon cannot be evolved far into the past. Hence the horizon generators are incomplete in $M_{\slashed{R}}$. The semiclassically allowed spacetime $M_{\slashed{R}}$ is substantially smaller than $M-I(R)$, which contains the entire black hole horizon.

In Sec.~\ref{inner}, we consider the Kerr-Newman solution in thermal equilibrium with a bath. We again consider the black hole interior $B$ at a sufficiently late time, when it is entangled with distant radiation $R$. Our theorem predicts a singularity before the inner horizon. In the classical Kerr-Newman solution, the region between the inner and outer horizons is regular, so this conclusion is of some interest. We discuss its potential relevance to the strong cosmic censorship hypothesis. 

\section{Semiclassical Gravity}
\label{semicl}

\subsection{Causal Structure}
\label{causal}

\begin{conv}\label{Mconv}
Everywhere below, $M$ will denote a time-orientable globally hyperbolic spacetime. ($M$ may be extendible.) We use an overdot to represent the boundary of a subset of $M$.
\end{conv}
\begin{conv}\label{Bconv}
Everywhere below, $B$ will denote a closed subset of a Cauchy slice $N$ of $M$, such that $\partial B$ is a compact codimension 2 submanifold of $M$ and $B-\partial B\neq \varnothing$. Here $\partial B$ denotes the boundary of $B$ in the induced topology of $N$. 
\end{conv}

\begin{defn}\label{ijdef}
The \emph{chronological} and \emph{causal future} and \emph{past}, $I^\pm(K)$ and $J^\pm(K)$, of any set $K\subset M$ are defined as in Wald~\cite{wald2010general}. For $K=\set{p}$, we drop the set brackets. Key consequences of these definitions include: $p\notin I^+(p)$ but $p\in J^+(p)$, and $I^+(K)$ is open.
\end{defn}

\begin{defn}\label{idef}
For any set $K\subset M$, we define its \emph{domain of influence} as the union of $K$ and all points that can be reached by a timelike curve from $K$:
$I(K) \equiv I^+(K) \cup I^-(K) \cup K$.
\end{defn}

\begin{defn}\label{ddef}
For any closed achronal set $K\subset M$, the \emph{future domain of dependence}, $D^+(K)$, is the set of points $p$ such that every past-inextendible causal curve through $p$ must intersect $K$. The \emph{past domain of dependence}, $D^-(K)$, is defined analogously. The \emph{domain of dependence} is $D(K)\equiv D^+(K)\cup D^-(K)$. The \emph{future Cauchy horizon} is $H^+(K)\equiv \overline{D^+(K)}-I^-[D^+(K)]$.
\end{defn}

\begin{defn}
Let $M$ be a spacetime with Cauchy surface $N$. ($N$ or $M$ may be extendible.) Let $B\subset \Sigma$. We say that $B$ is \emph{future singular in $M$} if its Cauchy horizon $H^+(B)$ contains an incomplete geodesic; \emph{i.e.}, a geodesic that is future-inextendible in $M$ but of finite affine length. Otherwise, we call $B$ \emph{future complete in $M$}.
\end{defn}

\subsection{Kinematics}

\begin{defn}
A \emph{nice global slice} is an inextendible Cauchy surface $\Sigma$ whose intrinsic and extrinsic geometry and quantum state $\rho(\Sigma,\Lambda)$ can be fully described using a cutoff scale $\Lambda\ll M_P$. In particular, curvature scalars and energy densities that can be constructed from the normal vector to $\Sigma$ must be sub-Planckian.
\end{defn}


\begin{defn}\label{nice}
Let $\Sigma$ be a nice global slice of $M$, with associated cutoff scale $\Lambda\ll M_P$. Let $B$ be defined as in Convention~\ref{Bconv}. If the intrinsic and extrinsic geometry of $\partial B$ is well resolved at the cutoff $\Lambda$ (that is, when $\partial B$ is put on a lattice with characteristic scale $\Lambda^{-1}$), then we call $B$ a \emph{nice partial slice}, and the quantum state on $B$ is defined as
\begin{equation}
    \rho(B,\Lambda) = \tr_{\Sigma-B} \rho(\Sigma,\Lambda)~,
\end{equation}
\end{defn}

\begin{defn}
We call $N$ a \emph{nice slice} if $N$ is a nice global slice or a nice partial slice.
\end{defn}
\begin{defn}\label{sgendef}
Let $N$ be a nice slice with associated cutoff scale $\Lambda$. The \emph{generalized entropy} of $N$ is
\begin{equation}\label{sgendefeq}
    \S^N=\frac{\A(\partial N)}{4G(\Lambda)\hbar}+\ldots+S(N,\Lambda)~,
\end{equation}
where 
$G(\Lambda)$ is the effective Newton constant at the scale $\Lambda$, and
\begin{equation}
    S(N,\Lambda) = -\tr_N \rho(N,\Lambda) \log \rho(N,\Lambda)
\end{equation}
is the von Neumann entropy of the quantum fields on $N$ at the scale $\Lambda$.
The area term is the leading gravitational counterterm. The subleading gravitational counterterms are indicated by ``$\ldots$''; see Ref.~\cite{Dong:2013qoa} for details.
\end{defn}

\begin{rem}
Niceness of $N$ is required in the above definition since otherwise $G(\Lambda)$ is not operationally defined, for example as the effective gravitational coupling in a scattering process. The boundary of a nice global slice ($N=\Sigma$) vanishes. The boundary of a nice partial slice ($N=B$) is understood to be defined in a completion $\Sigma\supset N$, where $\Sigma$ is a nice global slice. Neither the boundary area nor the generalized entropy of $N$ will depend on the choice of completion.
\end{rem}

\begin{conj}
The generalized entropy is cutoff-independent, in the following sense. Suppose that the slice $N$ is nice with respect to two different scales $\Lambda$ and $\Lambda'$. Under $\Lambda\to\Lambda'$, both terms in Eq.~\eqref{sgendefeq} will change, but their sum will not. For references supporting this claim, see the Appendix of Ref.~\cite{Bousso:2015mna}.
\end{conj}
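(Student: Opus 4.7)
The plan is to establish this via the Wilsonian picture of gravitational renormalization: the ultraviolet divergences of the matter entanglement entropy across the entangling surface $\partial N$ should be absorbed exactly into the renormalization of the gravitational coupling constants that appear in the area term and its subleading counterparts in Eq.~\eqref{sgendefeq}. First I would use the heat kernel expansion (or, equivalently, a replica-trick computation with a proper-time regulator) to extract the $\Lambda$-dependent short-distance part of $S(N,\Lambda)$ near $\partial N$. The result is known to take the universal form of a sum of integrals over $\partial N$ of local geometric invariants --- an area piece, a topological (Gauss--Bonnet) piece, terms built from the intrinsic and extrinsic curvatures of $\partial N$, and so on --- each with a coefficient that is a polynomial (plus possible logarithms) in $\Lambda$.

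Next I would compute the shift $G(\Lambda)^{-1}\to G(\Lambda')^{-1}$, and the analogous shifts in the subleading gravitational counterterm coefficients, that is induced by integrating out the field modes between the two scales. The graviton self-energy contributions of these modes are controlled by the same heat-kernel coefficients that appeared in the previous step, so the two sets of $\Lambda$-dependent expressions can be compared term by term. The matching identity to establish is
\begin{equation*}
\Delta\!\left[\frac{\A(\partial N)}{4G(\Lambda)\hbar}+\ldots\right]+\Delta S(N,\Lambda)=0
\end{equation*}
under $\Lambda\to\Lambda'$, with the ellipsis running over the subleading gravitational counterterms indicated in Eq.~\eqref{sgendefeq}. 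Equivalently, the beta functions of the gravitational couplings are precisely the negatives of the coefficients controlling the divergences of the entanglement entropy as the cutoff is moved.

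The main obstacle is the absence of a UV-complete theory of gravity, so the argument must be carried out within an effective field theory valid up to a finite scale well below $M_P$, with the matching summed over every matter species below the cutoff. For free fields the Susskind--Uglum-style computation makes this precise; for interacting theories one must invoke the universality of the divergent structure together with the Wilsonian flow of the gravitational action, and the choice of subtraction scheme must be tracked carefully throughout. A secondary subtlety is that when $\partial N$ has nontrivial extrinsic geometry, every extrinsic-curvature divergence of $S(N,\Lambda)$ must be matched by a corresponding boundary counterterm in the gravitational effective action; this is the main reason the bare area term alone is not cutoff-independent and why the subleading gravitational counterterms in Eq.~\eqref{sgendefeq} are indispensable.
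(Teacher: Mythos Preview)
The paper does not prove this statement. It is explicitly labeled a \emph{Conjecture}, and the paper's entire ``proof'' consists of a pointer to the Appendix of Ref.~\cite{Bousso:2015mna} for supporting references. So there is no paper-proof against which to compare your proposal; you are attempting to supply an argument where the authors deliberately declined to give one.

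That said, your sketch is a faithful outline of the standard Susskind--Uglum/Wilsonian argument that the cited references contain: heat-kernel (or replica) extraction of the local UV divergences of $S(N,\Lambda)$ as geometric integrals over $\partial N$, identification of those same heat-kernel coefficients in the running of $G^{-1}$ and of the higher-curvature couplings, and term-by-term cancellation. You correctly flag the two genuine obstructions that keep this a conjecture rather than a theorem: the absence of a UV completion forces the argument to live in effective field theory, and beyond free minimally coupled fields the matching is delicate (nonminimal couplings, gauge fields with contact terms, and the scheme dependence of the extrinsic-curvature counterterms are the well-known trouble spots). If you were to sharpen your proposal into a claim, it would be accurate to say that the cancellation is established for free minimally coupled matter and argued on general Wilsonian grounds otherwise --- which is exactly the status the paper's ``Conjecture'' label is meant to convey.
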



\subsection{Dynamics}

\begin{defn}\label{slicing}
A \emph{slicing} of the spacetime $(M,g)$ is a continuous map from an open interval to achronal subsets of $M$, $t\to N(t)$, such that every point in $M$ is contained in at least one $N(t)$, and $N(t')\subset J^+[N(t)]$ for $t'>t$. [Thus, a slicing is not a foliation. Along a timelike curve $\gamma$, the proper time of $\gamma\cap N(t)$ increases monotonically with $t$, but not strictly so.]
\end{defn}

\begin{defn}
A \emph{Cauchy slicing} of $(M,g)$ is a slicing such that each $N(t)$ is a Cauchy surface of $M$. 
\end{defn}
\begin{defn}
A \emph{nice Cauchy slicing} of $(M,g)$ is a Cauchy slicing such that each $N(t)$ is a nice slice with the same associated cutoff $\Lambda\ll M_P$. A collection of nice Cauchy slicings with cutoff $\Lambda$ will be denoted ${\cal S}_\Lambda$.
\end{defn}
\begin{defn}
A \emph{semiclassical spacetime} is a quadruplet $(M,g,{\cal S}_\Lambda,\rho$). Here $M$ is a globally hyperbolic manifold with metric $g$. ${\cal S}_\Lambda$ is a nonempty set of nice Cauchy slicings. For each slicing, $\rho(N(t),\Lambda)$ solves the Schrödinger equation of the quantum fields. The expectation values of local operators do not depend on the slicing. Finally,
\begin{equation}
    G_{ab} = 8\pi G(\Lambda) \braket{T_{ab}}+\ldots~,
\end{equation}
where $G_{ab}$ is the Einstein tensor computed from $g$, $T_{ab}$ is the stress tensor (viewed as an operator), and ``$\ldots$'' stands for higher-curvature corrections.
\end{defn}

\begin{rem}
The above definition ignores gravitons; this can be justified by taking the number of matter fields to be large. The Cauchy slices of $M$ may be partial and hence extendible; and in any case $M$ may be extendible.
\end{rem}

\begin{rem}\label{remark-1}
Given a nice slice $N(0)$, one can solve the quantum field theory and Einstein's equation iteratively in $G\hbar$, in some open neighborhood of $N(0)$, and thus generate a semiclassical spacetime.
\end{rem}

\begin{defn}\label{reduction}
Let $N$ be a nice Cauchy slice in a semiclassical spacetime $M$, and let $R\subset N$, $R\neq N$. A semiclassical spacetime with nice slicing $N_{\slashed{R}}(t)$ will be called a \emph{reduction of $M$ by $R$} and will be denoted $M_{\slashed{R}}$, if for every $t$, $N_{\slashed{R}}(t)\cup R$ is a nice slice of $M$. (See Fig.~\ref{fig-mrir} for an example.)
\end{defn}

\begin{defn}\label{qedef}
Let $M$ be a semiclassical spacetime, and let $B\subset M$ be a nice partial slice. The future-directed ingoing \emph{quantum expansion} of $B$ at $y\in \partial B$ is the rate of change of the generalized entropy under a shape deformation of $B$ along the ingoing future-directed null congruence orthogonal to $\partial B$:
\begin{equation}\label{qedefeq}
    \Theta_+^B(y) = \frac{4G\hbar}{\sqrt{h(y)}}\left.\frac{\delta \S[V]}{\delta V(y)}\right|_{\partial B}~.
\end{equation}
Here $h$ is the area element of the induced metric on $\partial B$. The functional derivative is taken with respect to the affine parameter $V(y)$ along the congruence that specifies the location of cuts of the congruence such as $\partial B$. 

The past-directed ingoing quantum expansion is defined analogously. Outgoing quantum expansions are related to the ingoing ones by a change of sign and exchange of past and future.
\end{defn}

\begin{rem}
The functional derivative in Eq.~\eqref{qedefeq} is an idealization that suppresses the cutoff $\Lambda$. The quantum expansion is well-defined only if one of the nice Cauchy slicings of $M$ contains slices that contain $B$ and its shape deformation. In particular, this excludes deformations whose transverse support near $y$ is localized to better than $\Lambda^{-1}$~\cite{Leichenauer:2017bmc}.
\end{rem}

\begin{conj}[Quantum Bousso Bound]\label{qbb}
Let $M$ be a semiclassical spacetime, and let $N$ and $N'$ be slices in one of the nice Cauchy slicings of $M$. Let $B\subset N$ be a nice partial slice, and let $\partial B^+$ ($\partial B^-$) be the subset of $\partial B$ with positive future (past)-directed inward quantum expansion. Let $B'=D(B)\cap N'$. If $N'\cap I^+(\partial B^+)=N'\cap I^-(\partial B^-)=\varnothing$, then

\begin{equation}
 \S^{B'}\leq \S^B~. 
\end{equation}
\end{conj}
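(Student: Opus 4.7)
The plan is to reduce the statement to a pointwise monotonicity result along the ingoing null congruences orthogonal to $\partial B$, using the Quantum Focusing Conjecture (QFC) as the essential input. The QFC asserts that along any such congruence, the quantum expansion $\Theta$ is non-increasing in the affine parameter; in particular, if $\Theta\leq 0$ at an initial cut then $\Theta\leq 0$ at every subsequent cut along the same generator.

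First, I would interpolate between $B$ and $B'$ by a one-parameter family of nice partial slices $\{B(\tau)\}_{\tau\in[0,1]}$ with $B(0)=B$ and $B(1)=B'$, realized by sweeping $\partial B$ along the future- and past-directed ingoing null congruences orthogonal to $\partial B$ until the boundary lands on $N'$. Each point $y\in\partial B$ must be flowed in one of these two ingoing null directions to reach $\partial B'$. The hypotheses fix the direction pointwise: since $N'\cap I^+(\partial B^+)=\varnothing$, points of $\partial B^+$ cannot reach $N'$ by going to the future and must flow to the past; symmetrically, points of $\partial B^-$ must flow to the future. At points in $\partial B^+-\partial B^-$, the past-directed ingoing expansion is non-positive by definition, so past flow is monotone; similarly for $\partial B^--\partial B^+$, with future flow. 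At points outside $\partial B^+\cup\partial B^-$, both directions have non-positive expansion and either is admissible. At points in $\partial B^+\cap\partial B^-$, neither direction is admissible, and consistency with $B'=D(B)\cap N'$ forces $\partial B'$ to coincide with $\partial B$ there, so no flow is required.

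Second, along each chosen generator, QFC propagates the sign: $\Theta\leq 0$ throughout the flow from $\tau=0$ to the cut at $\partial B(\tau)$. Using Def.~\ref{qedef},
\begin{equation}
\frac{d\S^{B(\tau)}}{d\tau} \;=\; \frac{1}{4G\hbar}\int_{\partial B(\tau)} \Theta(y,\tau)\,\dot V(y,\tau)\,\sqrt{h(y,\tau)}\,d^{d-2}y,
\end{equation}
where $\dot V>0$ in the chosen null direction, the integrand is non-positive at every $\tau$; integration over $\tau$ yields $\S^{B'}\leq\S^{B}$.

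The hard part is that QFC itself is an unproven conjecture in full generality, so the argument is only conditional. Even granting QFC, one must verify that every intermediate $B(\tau)$ is itself a nice partial slice embeddable in some nice Cauchy slicing of $M$: the interpolating null sheets must stay within the semiclassical regime, and the boundary must remain resolved at the cutoff $\Lambda$. Handling caustics along the ingoing congruences, where the affine parametrization breaks down and generators may leave the congruence, is the principal technical subtlety; the treatment of the overlap region $\partial B^+\cap\partial B^-$ and the matching of the future- and past-flowed portions of $\partial B(\tau)$ across their common boundary are further consistency points that must be checked.
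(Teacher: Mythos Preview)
The statement you are attempting to prove is labeled a \emph{Conjecture} in the paper, and the paper does not supply a proof. In the Remark immediately following Conjecture~\ref{qbb}, the authors state that the bound ``was originally obtained as a consequence of the Quantum Focusing Conjecture~\cite{Bousso:2015mna}. However, its derivation was somewhat heuristic and omitted a careful regularization of points where null generators leave $\dot I(C)$. Here we will assume Conjecture~\ref{qbb} directly.'' So there is no proof in the paper to compare your proposal against: the authors explicitly decline to derive the bound from QFC and instead take it as an independent assumption.

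Your proposal is precisely the heuristic QFC-based argument the paper alludes to. You correctly identify it as conditional on QFC, and the technical subtleties you flag---caustics and generators leaving the congruence, and the requirement that every intermediate $B(\tau)$ be a nice partial slice in some nice Cauchy slicing---are exactly the obstructions the paper has in mind when it calls the derivation heuristic. In that sense your write-up is an honest sketch of the motivation, but it is not a proof, and the paper does not claim one exists. If anything, the paper's stance is that these gaps are serious enough that Conjecture~\ref{qbb} should be treated as a primitive input rather than a corollary of QFC.
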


\begin{rem}
The above conjecture was originally obtained as a consequence of the Quantum Focusing Conjecture~\cite{Bousso:2015mna}. However, its derivation was somewhat heuristic and omitted a careful regularization of points where null generators leave $\dot I(C)$. Here we will assume Conjecture~\ref{qbb} directly. The original Bousso bound~\cite{Bousso:1999xy} follows in the limit where $\Theta$ is well approximated by the classical expansion and $M$ satisfies the Null Curvature Condition.
\end{rem}

\begin{rem}
(Note added.) Compared to our Conj.~\ref{qbb}, Sec.~2 of~\cite{Rolph:2022csa} imposes the additional requirement that $B'\neq \varnothing$, and more strongly that $N'$ must intersect every connected component of $D(B)$. We believe this restriction is too strong and also unnecessary. An important manifestation of the GSL is the fact that the generalized entropy outside the horizon of a black hole is larger than the the (ordinary) entropy before the black hole has formed. This key feature follows from the quantum Bousso bound only if $B'=\varnothing$ is allowed. On the other hand, Ref.~\cite{Rolph:2022csa} does not restrict the application of the bound and of the GSL to nice slices. We believe that this is too permissive, even when combined with the restriction to nonempty $B'$ advocated in Ref.~\cite{Rolph:2022csa}. As we discuss at the end of Sec.~\ref{pastofb}, the generalized entropy becomes negative and stops making sense on slices allowed by this set of criteria.\footnote{With Rolph's conditions on the GSL and the quantum Bousso bound, the proof of the ``island finder'' theorem~\cite{Bousso:2021sji} would indeed have a loophole as claimed in Ref.~\cite{Rolph:2022csa}. With ours, it does not. In the case of concern to Ref.~\cite{Rolph:2022csa}, Conj.~\ref{qbb} would be violated, so a (possibly quantum) singularity must form. (In the maximin formalism~\cite{Wall:2012uf}, it is necessary to assume that the maximin slice is repelled by singularities. We propose that this feature extends to quantum singularities.)}
\end{rem}

\section{Singularity Theorem}
\label{sec-qst}

\begin{thm}[Singularity Theorem for Hyperentangled Regions]
\label{qst}
Let $M$ be a semiclassical spacetime with nice Cauchy slice $N$. Let the disjoint union $B\cup R \subset N$ be a nice slice, with $\partial B$ compact. Suppose that the future-directed inward quantum expansion of $B\cup R$ is negative everywhere on $\partial B$:
\begin{equation}\label{qsta1}
    \Theta_+^{B\cup R}(y)<0~~\mbox{for~all}~~ y\in \partial B~.
\end{equation}
Suppose moreover that $B$ is hyperentangled with $R$, that is:
\begin{equation}\label{qsta2}
    \S^{BR}<\S^R~.
\end{equation}
Let $M_{\slashed{R}}$ be a reduction of $M$ by $R$ (see Def.~\ref{reduction}). Assuming Conjecture~\ref{qbb} (Quantum Bousso Bound), $B$ is future singular in $M_{\slashed{R}}$, \emph{i.e.}, $H^+(B)\cap M_{\slashed{R}}$ contains an incomplete null geodesic. 
\end{thm}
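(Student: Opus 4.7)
My plan is to argue by contradiction: assume that every null geodesic in $H^+(B)\cap M_{\slashed R}$ is complete, and deform $\partial B$ inward along the future-directed ingoing null congruence $\mathcal N^+$ to reach a slice to which the quantum Bousso bound, together with the hyperentanglement hypothesis \eqref{qsta2}, supplies a contradiction.

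First I would set up the null deformation. For each $y\in\partial B$, let $\gamma_y(v)$ denote the future-directed ingoing null geodesic from $y$ (a generator of $\mathcal N^+$), parameterized by affine parameter $v\ge 0$. For a smooth non-negative function $V:\partial B\to\mathbb{R}$, define the deformed cut $\partial B(V)=\{\gamma_y(V(y)):y\in\partial B\}$, and let $B(V)$ be the corresponding nice partial slice obtained by null evolution of $B$ along $\mathcal N^+$ and matching to $R$ along $\partial B \cap \partial R$. Under the contradiction assumption, $\gamma_y$ is defined in $M_{\slashed R}$ for all $v\in[0,V^*(y))$, where $V^*(y)$ is either the affine parameter at which $\gamma_y$ leaves $H^+(B)$ (a focal or intersection point of the congruence) or $+\infty$. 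For admissible $V\le V^*$, the slice $B(V)\cup R$ is a nice Cauchy slice of the reduction $M_{\slashed R}$ by Def.~\ref{reduction}.

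Next I would invoke Conjecture~\ref{qbb} with initial nice slice $B\cup R$ and later slice $B(V)\cup R$. Hypothesis \eqref{qsta1} states that $\Theta_+^{BR}(y)<0$ everywhere on $\partial B$, so the future-inward positive-expansion subset of $\partial(B\cup R)$ lying on $\partial B$ is empty; since the reduction holds $R$ fixed, no deformation occurs past $\partial R\setminus\partial B$, so the causal hypothesis of the QBB is satisfied. The bound therefore yields
\begin{equation}
    \S^{B(V)\cup R}\le \S^{B\cup R}
\end{equation}
for every admissible $V$. The crucial remaining step is to exhibit a limiting deformation along which $B(V)\to\varnothing$, so that the left side tends to $\S^R$. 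Taking $V(y)\nearrow V^*(y)$, each generator either reaches a focal/caustic point of the congruence (where it leaves $H^+(B)$ so the corresponding piece of $\partial B(V)$ disappears), or escapes to infinite affine parameter by the completeness assumption. In either case the cross-section collapses; the residual region is empty and $\S^{B(V)\cup R}\to \S^R$. Combined with the QBB, this gives $\S^R\le \S^{B\cup R}$, contradicting \eqref{qsta2}.

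The main obstacle lies in the limiting step: in the absence of the Null Curvature Condition, classical Raychaudhuri focusing is not available to force the cross-section to collapse in finite affine parameter. I expect the fix to use standard causality theory, which says that a future-endpointless generator of $H^+(B)$ can only leave $H^+(B)$ at a caustic or crossing point, together with compactness of $\partial B$ to extract a uniform $V^*$. A secondary technical worry is ensuring that the interpolating slices $B(V)\cup R$ remain \emph{nice} in the sense of Def.~\ref{nice} all the way to the limit; here the freedom to choose the smoothing $V(y)$, plus the fact that the positive-expansion condition in the QBB refers only to the initial slice $\partial B$, should suffice.
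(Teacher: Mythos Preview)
Your overall strategy---assume completeness, then use the quantum Bousso bound to push $B$ away until only $R$ remains and derive $\S^R\le \S^{BR}$---is the same endgame as the paper's. But there is a genuine gap in the middle, and your proposed fix does not close it.

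The gap is the case of a generator $\gamma_y$ that remains on $H^+(B)$ for \emph{infinite} affine parameter. You write that such a generator ``escapes to infinite affine parameter by the completeness assumption'' and that ``in either case the cross-section collapses.'' This is not correct: nothing forces the cross-sectional area along an infinite-length generator to go to zero, so $B(V)$ need not shrink to the empty set, and the limit $\S^{B(V)\cup R}\to\S^R$ is unjustified. Your suggested remedy---``standard causality theory \ldots\ together with compactness of $\partial B$''---only tells you that a generator which \emph{does} leave $H^+(B)$ does so at a caustic or crossing; it does not force every generator to leave at all. Compactness of $\partial B$ cannot supply a uniform finite $V^*$ when some $V^*(y)=\infty$. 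The paper supplies the missing idea: if a generator $\gamma$ had infinite affine length on $H^+(B)$, then $\dot I^-(\gamma)$ would be a causal horizon, and the GSL would force $\Theta_+^{(I^-(\gamma)\cap N)\cup R}\ge 0$ at the point where $\partial B$ touches it; Wall's monotonicity (Theorem~3 of \cite{Wall:2010jtc}) then gives $\Theta_+^{B\cup R}\ge 0$ there, contradicting \eqref{qsta1}. This is the substantive input that replaces classical focusing.

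There is a second, more structural difference. Once infinite-length generators are excluded and $H^+(B)$ is compact, the paper does \emph{not} take a limit of null-deformed slices $B(V)\cup R$ (whose niceness in the limit you rightly flag as a worry). Instead it uses the given nice slicing $N_{\slashed R}(t)$ of $M_{\slashed R}$ directly: by compactness of $H^+(B)$ and monotonicity of the slicing, some $N_{\slashed R}(t_{\rm above})$ must lie entirely to the future of $H^+(B)$, and by Def.~\ref{reduction} the slice $N_{\slashed R}(t_{\rm above})\cup R$ is nice by construction. The QBB is then applied once, between $N$ and this single nice slice, with $B'=R$. This sidesteps the niceness-in-the-limit issue entirely.
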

\begin{proof}
The Cauchy horizon $H^+(B)$ is topologically the direct product of $\partial B$ with the future-inward directed null geodesics orthogonal to $\partial B$, up to possible identifications of their endpoints on $H^+(B)$. By Eq.~\eqref{qsta1}, no null geodesic can remain on $H^+(B)$ for infinite affine time~\cite{Wall:2011hj}.\footnote{For if such a geodesic $\gamma$ did exist, then $\dot I^-(\gamma)$ would be a causal horizon, and by the Generalized Second Law, $\Theta_+^{(I^-(\gamma)\cap N)\cup R}\geq 0$. By construction, $I^-(\gamma)\subset B$ and $\partial B$ touches $\dot I^-(\gamma)$ at $p=\gamma\cap \partial B$. 
By Theorem 3 of Ref.~\cite{Wall:2010jtc}, $\Theta_+^{B\cup R}(p)\geq \Theta_+^{(I^-(\gamma)\cap N)\cup R}(p)\geq 0$,
which contradicts Eq.~\eqref{qsta1}.} Assuming for contradiction that $B$ is future complete in $M_{\slashed{R}}$, it follows that $H^+(B)$ contains the endpoints of all of its generators. Compactness of $\partial B$ then implies that $H^+(B)$ is compact.

Let $N_{\slashed{R}}(0) = N-R$, and assume for contradiction that $N_{\slashed{R}}(t) \cap H^+(B) \neq \emptyset$ for all $t\geq0$. Let $t_n$ be a monotonically increasing sequence that converges to the upper bound of the time interval for which the slicing $N_{\slashed{R}}(t)$ is defined (or diverges to $\infty$ if there is no upper bound), and let $x_n\in N_{\slashed{R}}(t_n)\cap H^+(B)$.\footnote{This step invokes the axiom of choice; perhaps this can be eliminated.} By compactness of $H^+(B)$, the sequence $x_n$ has an accumulation point $p\in H^+(B)$. Let $q\in I^+(p)$ and let $N_{\slashed{R}}(t_q)$ be a slice that contains $q$. Because a slicing moves forward in time monotonically by Def.~\ref{slicing}, there exists a small neighborhood $O(p)$ that no slice with $t\geq t_q$ can intersect. This contradicts the fact that $p$ is an accumulation point. 

Therefore $M_{\slashed{R}}$ admits a nice slice such that $N_{\slashed{R}}(t_{\rm above}) \cap H^+(B) = \emptyset$, $t_{\rm above}>0$, and
by Def.~\ref{reduction}, $M$ admits a nice slice 
\begin{equation}
    N'\equiv N_{\slashed{R}}(t_{\rm above})\cup R
\end{equation}
that contains $R$ and fails to intersect $D(B)$. $N'$ satisfies the assumptions of the quantum Bousso bound as applied to $B\cup R\subset N$. (In Conjecture~\ref{qbb}, substitute $B\to B\cup R$.) Hence
\begin{equation}
    \S^R\leq \S^{BR}~,
\end{equation}
which contradicts Eq.~\eqref{qsta2}. Hence $B$ must be future singular in $M_{\slashed{R}}$.
\end{proof}

\begin{rem}
Note that the assumption \eqref{qsta2} cannot be satisfied if $R=\varnothing$, so any nontrivial application of the theorem requires a nonempty choice of $R$. However, $R$ can be arbitrarily far from $B$. After a straightforward adaptation of the relevant definitions, $R$ can even be treated as a nongravitating quantum system that is external to the spacetime. In that case, $M_{\slashed{R}}$ can be an inextendible spacetime.
\end{rem}

\begin{rem}
The singularity theorem for hyperentropic regions~\cite{Bousso:2022cun} emerges in the limit as $\hbar\to 0$. In this limit, $\Theta\to\theta$, so the quantum expansion is well approximated by the classical expansion. Moreover, nice slices will cover all of $M-I(R)$ in this limit. If the entropy in $B$ is not purified by some disjoint region $R$ then an appropriate external purification can be added.
\end{rem}


\begin{rem}
By the previous remark, Theorem~\ref{qst} applies to all of the examples discussed in Ref.~\cite{Bousso:2022cun}, which include several settings where Penrose's theorem would not apply. In all cases we must first introduce an external purification $R$ of the matter entropy in $B_i$. In the following section, we study examples of singularities predicted by Theorem~\ref{qst} that have no classical analogue.
\end{rem}

\section{Hyperentangled Black Holes}
\label{evap}

The conditions of Theorem~\ref{qst} can be satisfied by choosing $B$ to be a region in an evaporating black hole after the Page time, with $R$ a region containing the Hawking radiation. One can arrange that both quantum expansions are negative, $\Theta^{B\cup R}_\pm |_B<0$, so the Theorem predicts a singularity both along $H^+(B)$ and along $H^-(B)$ when $R$ is held fixed.

Let us discuss this in more detail. A slice of the black hole interior after the Page time is by definition hyperentangled with the Hawking radiation emitted so far; let $R$ be the region containing this radiation. By picking $B$ to be the interior of a sphere which is slightly outside of the horizon one can arrange $\Theta^{B\cup R}_\pm |_B<0$.

Alternatively, one can obtain a region $B$ with these properties by deforming the island $I$~\cite{Penington:2019npb,Almheiri:2019psf} associated to $R$. By quantum maximin~\cite{Akers:2019lzs}, islands generically satisfy $\partial_\ell \Theta_k = \partial_k \Theta_{\ell}<0$, where $k$ and $\ell$ are null future-directed orthogonal vectors fields on $\partial I$ outward and inward to $I$ respectively. Therefore, by slightly deforming the island in the future-outward and past-outward null directions, one obtains a hyperentangled region which satisfies the conditions of Theorem~\ref{qst} both in the future and past directions. One can also use this method in an eternal black hole coupled to a bath~\cite{Almheiri:2019yqk} to find a region $B$ with these properties.

Naively, both the future and the past applications of Theorem~\ref{qst} to such a region are quite puzzling. Schwarzschild black holes have a singularity along $H^+(B)$; but for a Schwarzschild black hole formed from regular initial conditions, $H^-(B)$ is complete by construction. At the classical level, even $H^+(B)$ is complete when charge or angular momentum is present. 
Small classical perturbations are believed to produce a spacelike singularity before the inner horizon, but the conditions of our theorem are satisfied in the unperturbed Kerr-Newman solution (see Fig. \ref{fig-RN1}). 

However, the spacetime $M_{\slashed{R}}$ covered by \emph{nice slices that all contain} $R$ is smaller than $M- I(R)$. We will now argue that this implies that the null generators of both $H^+(B)$ and $H^-(B)$ are incomplete in the semiclassically allowed spacetime $M_{\slashed{R}}$, as predicted by Theorem~\ref{qst}. In particular, we demonstrate that \emph{any} Cauchy slicing of the spacetime $M$ with slices that contain $R$, the slices that probe the region beyond the endpoints of $H^{\pm}(B)$ necessarily have exponentially large extrinsic curvature.

\begin{figure}
\begin{center}
\includegraphics[width=0.9\textwidth]{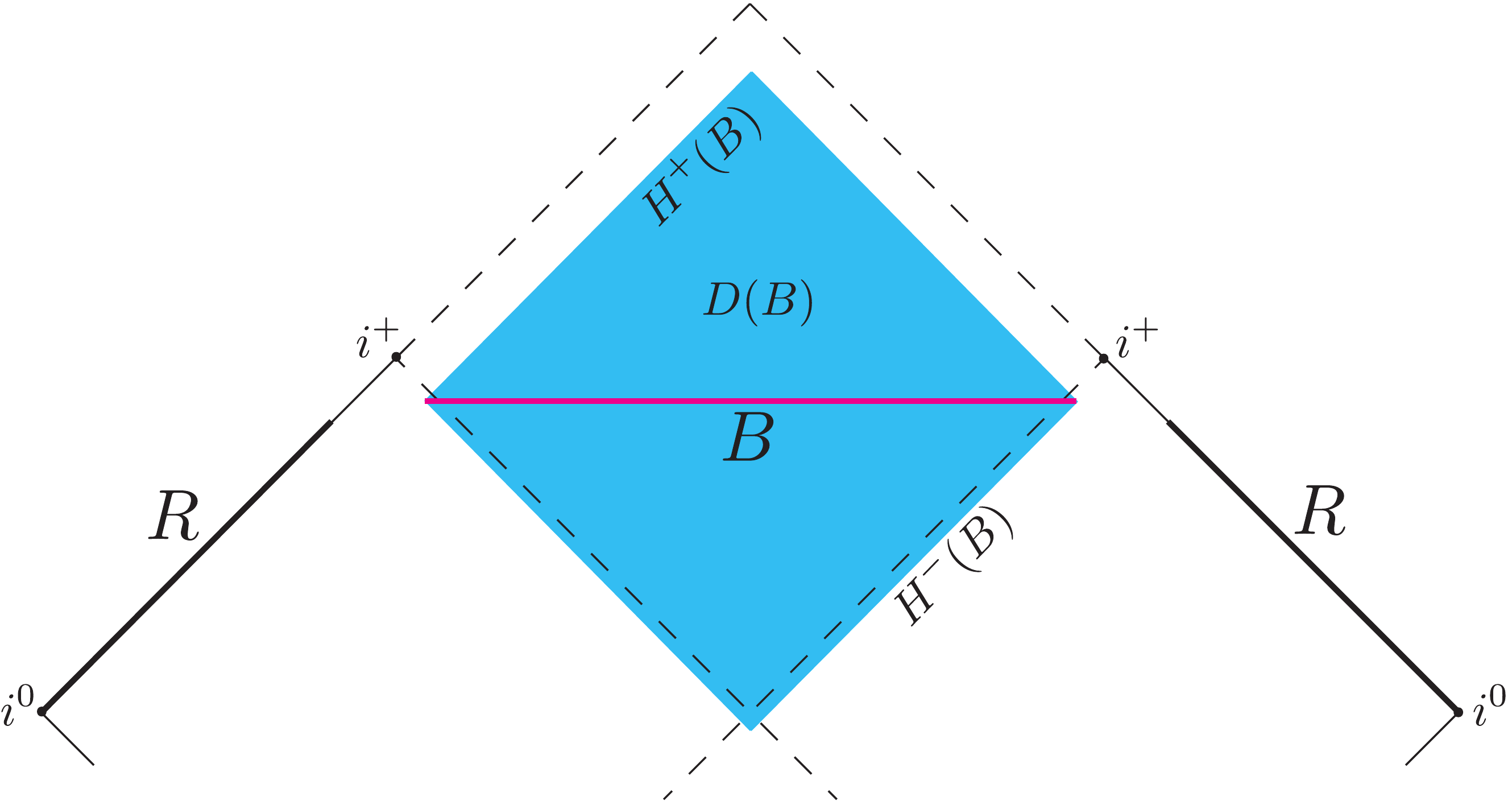}
\caption{\label{fig-RN1}Reissner–Nordstrom black hole. The inner and outer horizons are shown as dashed lines. In the Hartle-Hawking state, the conditions of the singularity theorem can be satisfied if $R$ is a large enough region near null infinity containing Hawking radiation and $B$ the interior of the black hole containing the purification of the radiation. This is puzzling since all null generators of both Cauchy horizons $H^\pm(B)$ contain their endpoints in the classical spacetime. However, semiclassically, $H^-$ encounters a quantum singularity when $R$ is fixed; $H^+$ does too, or else quantum corrections significantly alter the geometry near the inner horizon.}
\end{center}
\end{figure}

In this section, we speculate on why the semiclassical spacetime $M_{\slashed{R}}$ might be smaller than $M- I(R)$ leading to the incompleteness of the generators of both $H^+(B)$ and $H^-(B)$, upholding Theorem~\ref{qst}. We demonstrate that \emph{any} Cauchy slicing of the spacetime $M$ with slices that contain $R$, the slices that probe the region beyond the endpoints of $H^{\pm}(B)$ necessarily have exponentially large extrinsic curvature.

\subsection{Quantum Singularity On a Classically Regular Horizon}
\label{pastofb}

For concreteness, consider a maximally extended Schwarzschild black hole of radius $r_S$ in the Hartle-Hawking state, in 3+1 spacetime dimensions. Advanced and retarded time are defined by $u=t-r^*$ and $v=t+r^*$, where $r^* = r+r_S \log|(r/r_S)-1|$. The near horizon zone is the region $r_S<r<3r_S/2$; its outer boundary will be denoted $Z$. Below, we will also use Kruskal coordinates, $U = -2r_S e^{1-u/2r_S}$ and $V = 2r_S e^{1+v/2r_S}$, which cover the entire spacetime. (These are slightly nonstandard to match the standard Rindler coordinates.) We define $T=(U+V)/2$ and $X=(U-V)/2$.

Let $R$ be the union of a right asymptotic bulk region and its left mirror image; see Fig.~\ref{fig-twosidedpast}. On the right, $R$ is given by the portion $U<-U_0$ of a constant $t$ slice, with $t$ chosen large enough for $R$ to be far from the black hole. We choose $U_0$ past the Page time, that is, $U_0 \lesssim r_S e^{-\gamma_1 S}$ where $S$ is the Bekenstein-Hawking entropy of the black hole and $\gamma_1 \sim O(1)$. The boundary of the past of $R$ intersects the boundary of the near-horizon zone at $v_{\dot I(R)\cap Z}$. Choosing $B$ to be the black hole interior at the same (or slightly earlier) value of $v$, Theorem~\ref{qst} predicts a singularity along $H^-(B)$. 

Since the spacetime is classically regular in the past of $B$, this must be a quantum singularity. We will now verify this prediction.

\begin{figure}
\begin{center}
\includegraphics[width=0.9\textwidth]{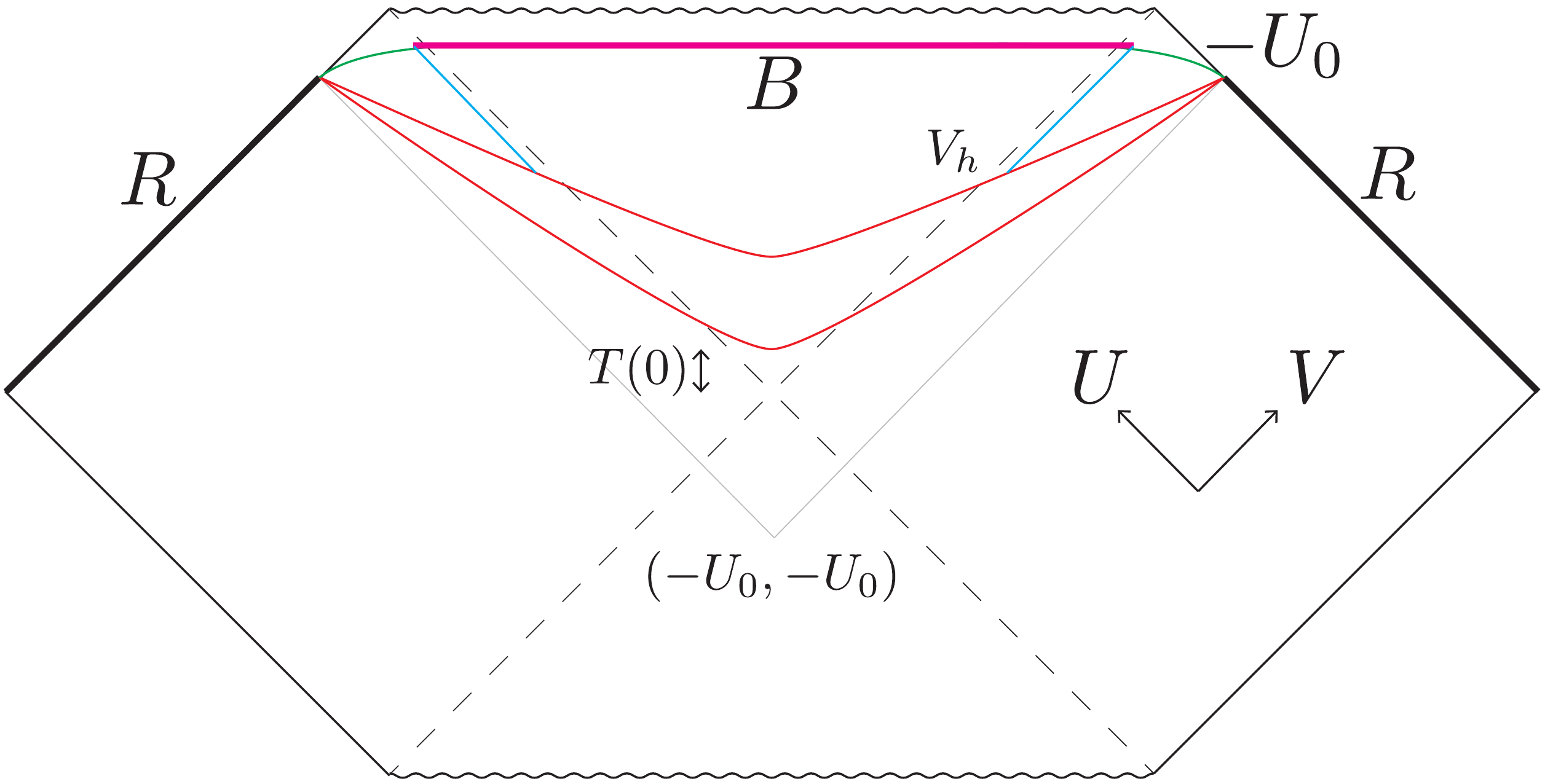}
\caption{\label{fig-twosidedpast}Schwarzschild black hole. In the Hartle-Hawking state, $B$ and $R$ can be chosen such that the conditions of the theorem can be satisfied towards the past of $B$. This is a quantum singularity: Cauchy slices containing $R$ which intersect the horizon around a Page time in the past of $B$ (red lines) have exponentially large extrinsic curvature.}
\end{center}
\end{figure}

For $v<v_{\dot I(R)\cap Z}$, $\dot I^-(R)$ lies within the Rindler region where the metric is well-approximated by
\begin{align}
   ds^2 = -dT^2 + dX^2 + r_S^2 d\Omega^2 + O(T^2, X^2)
\end{align}

Intuitively, this leaves little room between $\dot I(R)$ for spacelike slices that contain $R$ and enter the horizon very early. We will now argue that there exist no nice slices containing $R$ that intersect the horizon at or earlier than
\begin{equation}\label{eq-earlyv}
    v_0=v_{\dot I(R)\cap Z} - \gamma_2\, t_{\rm Page}~,
\end{equation}
where $\gamma_2\sim O(1)$. 

One of the necessary conditions for niceness is that the trace of the extrinsic curvature, $K$, is not too large. The precise condition is not clear to us. The early literature on nice slices~\cite{Lowe:1995ac,Polchinski:1995ta} suggests $|K| \ll 1/l_P$, but this may be too stringent. (In four or more spacetime dimensions, it would exclude slices that contain the Hawking radiation $R$ and its entanglement island.) We shall use the more lenient necessary condition
\begin{equation}\label{eq-kpoly}
    |K| \ll \frac{1}{l_P}\left(\frac{\ell}{l_P}\right)^n
\end{equation}
where $n>0$ is an unknown fixed constant, and $\ell$ is a characteristic scale of the geometry (here, $\ell\sim r_S$). We will see that even this rather weak niceness condition cannot be satisfied for any $n$, because $|K|$ becomes exponentially large at early times. 

Let $\Sigma$ be a left-right symmetric smooth Cauchy slice. (Thus, we assume that niceness cannot be rescued by using a slicing that spontaneously breaks the left-right symmetry.) $\Sigma$ is fully determined by a function $T= T_\Sigma(X)$.
$\dot{I}(R)$ is given by $T=|X|-U_0$. For $\Sigma$ to contain $R$ we must have $T_\Sigma > |X|-U_0$.
In the Rindler region, the extrinsic curvature is well approximated by
\begin{align}\label{eq-extrinsic}
    K_\Sigma = \frac{T''}{(1-T'^2)^{3/2}}~.
\end{align}
So long as $\Sigma$ is spacelike, this quantity is real.

Let $\Sigma$ intersect the horizon at $(U=0, V_h)$. The timelike proper distance between the intersection and $(-U_0, -U_0)$ is given by $\sqrt{(V_h+U_0)U_0}$. The smaller this distance, the larger $|K_\Sigma|$ needs to be if $\Sigma$ is not to become spacelike separated from $R$. To see this, first consider the special case where $T_\Sigma(0)$ is small enough to be in the Rindler region, with $T'(0)$ arbitrary. Given an upper bound $K_{\text{max}}$ on the magnitude of the extrinsic curvature, one finds for $|X|\gg 1/K_{\text{max}}$:
\begin{align}\label{eq-sigmaineq}
     T_\Sigma(X) < T_\Sigma(0) -\frac{1}{\sqrt{1-T'(0)^2}}\frac{1}{K_\text{max}}[1-T'(0) \mbox{sgn}(X)]+ |X|~.
\end{align}
Now consider the situation of interest: a slice $\Sigma$ which intersects the horizon at $(U=0, V_h)$ with $U_0 \lesssim r_S e^{-\gamma_1 S}$ and $V_h\lesssim \frac{r_S}{U_0}e^{-\gamma_2 S}$. (This corresponds to $u_0$ past the Page time and $v_h\leq v_0$ with $v_0$ given by Eq. \eqref{eq-earlyv}.) In the Rindler region, the sphere $(U=0, V_h)$ is related by a boost to the sphere $(X=0, T(0))$ with $T(0)=-U_0+\sqrt{(V_h+U_0)U_0}$. By Eq.~\eqref{eq-sigmaineq}, there exists no solution for $\Sigma$ with subexponential extrinsic curvature.

The only alternative to the presence of a quantum singularity is that the theorem fails, which means that one of its assumptions must fail. Indeed, if there were no restriction on the extrinsic curvature on a semiclassical slice, then the example in this subsection could be viewed as a violation of the quantum Bousso bound. Moreover, in the special case where $B$ is precisely the black hole interior, the example would furnish a violation of the Generalized Second Law of thermodynamics. Our viewpoint is that these assumptions are valid in the semiclassical regime, and that the theorem has simply uncovered a (possibly surprising) limitation of the semiclassical regime.

In simple models models where the entropy is approximated by a two dimensional CFT (see e.g.~\cite{Penington:2019npb, Almheiri:2019psf, Almheiri:2019yqk, Hartman:2020swn}), it is easy to show that the quantum focusing conjecture is satisfied along $H^-(B)$, even if we ignore the restriction to nice slices. In particular, the quantum expansion formally exists and remains negative along $H^-(B)$. However, the quantum Bousso bound is still violated by the (non-nice) slice that stays below the black hole. It is important to emphasize that starting with $\S$ on a nice slice, and then integrating the quantum expansion, is a valid method for computing the generalized entropy of any other \emph{nice} slice. In particular one is permitted to continue past caustics and self-intersections. Such features are generic, so this is a crucial ingredient in important semiclassical generalization of classical theorems. The problem in the present example is different: the slices do not stay nice. 

If one ignores this limitation, integration of $\Theta^{B\cup R}_-$ formally yields negative values of $\S$ well before the tip of the event horizon is reached, followed by a discontinuity when the slices no longer intersect $H^-(B)$. (To see this for an evaporating black hole, note that the quantum expansion along the horizon is to a good approximation independent of whether slices end at spatial infinity or at null infinity, and hence, so is the integrated change in $\S$. But the latter can be much greater than the generalized entropy of the complement of $B\cup R$ when $R$ is the Hawking radiation sufficiently far past the Page time.)  $\S<0$ has no interpretation as a von Neumann entropy in a fundamental theory, so this would be a nonsensical conclusion.

One might be tempted to ``save'' the GSL and the quantum Bousso bound for non-nice slices, by observing that the exact von Neumann entropy of $R$ receives non-perturbative corrections, which cause it to be bounded above by the Bekenstein-Hawking entropy of the black hole. This is not correct. 

First, the quantum Bousso bound is a semiclassical bound and is expected to apply to the semiclassical state, not to the nonperturbatively correct state. The same is true for the generalized second law. Calculating the generalized entropy outside of an evaporating black hole using the exact von Neumann entropy of radiation results in the violation of the generalized second law after the Page time, but in the semiclassical state the entropy of radiation increases throughout the process of evaporation, upholding the generalized second law.

Secondly, the conditions of our theorem can be satisfied even when there is no difference between the semiclassical and exact von Neumann entropy of $R$. For example, take $U_0$ to correspond to a few scrambling times, rather than the Page time. Then there exists a nonminimal quantum extremal surface associated to $R$. Now, consider moving this surface in the outward past null direction towards $\dot I(R)$. In simple 1+1 models with CFT matter, one finds that the generalized entropy of the enclosed region union $R$ decreases without bound. Therefore, at some point along the deformation the regions become hyperentangled. 
Furthermore, one can check that the quantum expansion also has the correct sign needed for the singularity theorem. However, again any Cauchy slice containing $R$ and dipping below the past tip of the event horizon necessarily has exponential extrinsic curvature. We view this as additional evidence for our nice slice criterion. (In fact, the above construction fails to yield a region that satisfies the assumptions of our theorem on a nice slice.)

\subsection{Classical vs.\ Quantum Singularity in Kerr-Newman Black Holes}
\label{inner}

Here we will discuss the singularity theorem applied to the future of $B$. In the Schwarzschild solution, the generators of $H^+(B)$ are obviously incomplete due to the curvature singularity at $r=0$, validating the prediction of our theorem. In charged or rotating black holes, however, the generators of $H^+(B)$ contain their endpoints, apparently violating Theorem~\ref{qst}. For simplicity, we will discuss this in detail for the Hartle-Hawking state of the maximally extended Reissner-Nordstrom black holes of nonzero charge, though we expect the main lessons to generalize to Kerr and Kerr-Newman black holes. The metric is given by:
\begin{align}\label{eq-RNmetric}
ds^2 = -f(r) dt^2 + f(r)^{-1} dr^2 + r^{2} d \Omega_{d-1}^2~,
\end{align}
where
\begin{align}
f(r) = \left(1-\frac{r_+}{r}\right) \left(1-\frac{r_-}{r}\right)~.
\end{align}
We pick $R$ to be the union of the asymptotic region similar to subsection~\ref{pastofb} and $B$ a late time slice of the interior such that $B$ is hyperentangled with $R$. Furthermore, the quantum expansion of $B \cup R$ along $H^+(B)$ can be easily arranged to be negative since the area variation towards the interior is large and negative. Therefore, it is easy to satisfy the conditions of our theorem towards $H^+(B)$. See Fig. \ref{fig-RN1}.

In the classical Reissner-Nordstrom background, the generators of $H^+(B)$ contain their endpoints (see Fig.~\ref{fig-RN2}). These lie on a sphere $\mu$ near the inner horizon bifurcation surface, of radius $r_-+\delta r$. The region $B$ is located around the Page time at the earliest, which implies
\begin{align}
    \delta r \lesssim L \exp{\left(-\frac{\alpha r_+^4}{r_-^2 G}\right)}~,
\end{align}
where $L$ is some function of $r_+$ and $r_-$, and $\alpha$ is an order one coefficient. Therefore, $\mu$ is exponentially close to the inner horizon bifurcation surface.

\begin{figure}
\begin{center}
\includegraphics[width=0.8\textwidth]{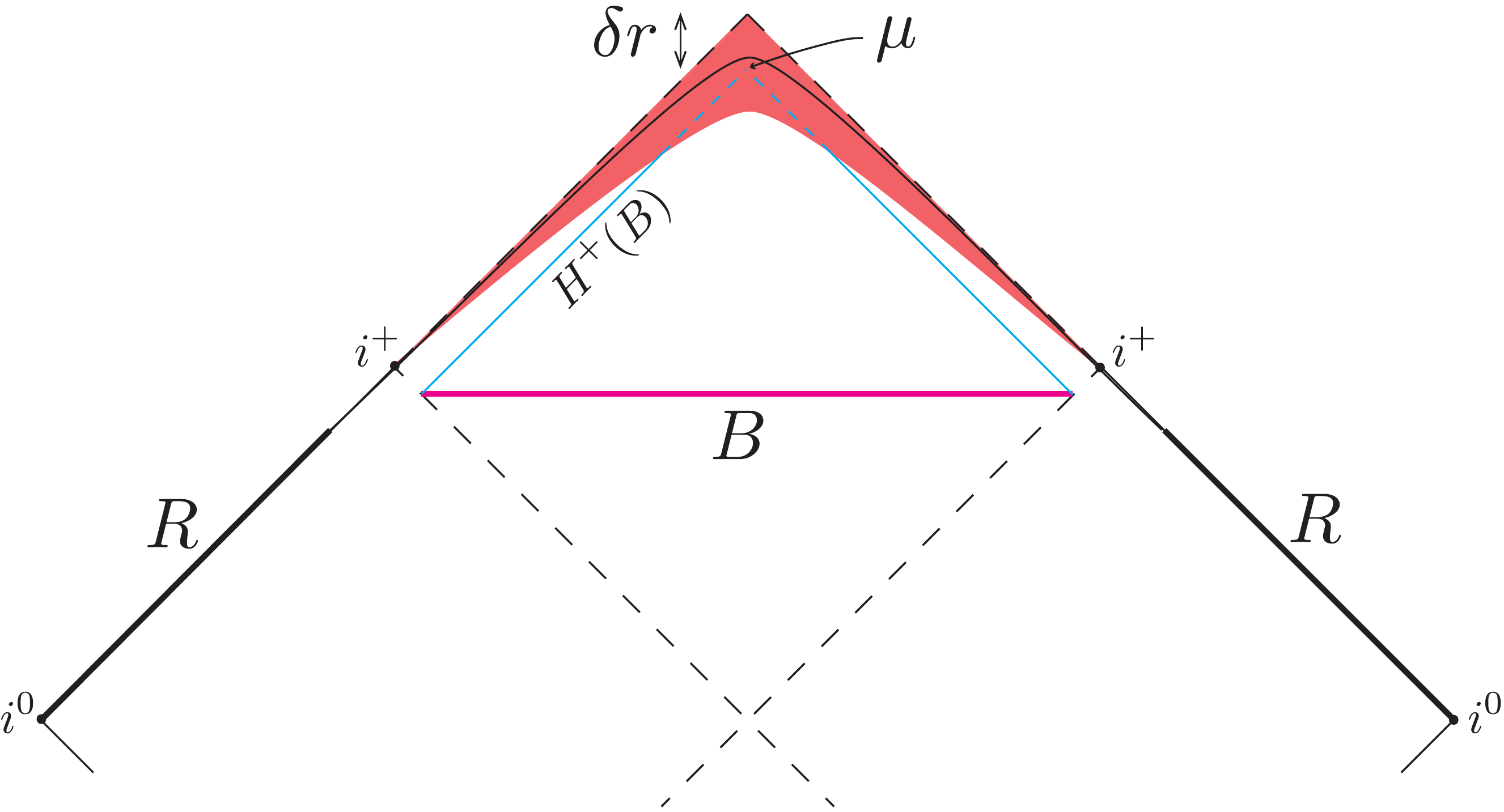}
\caption{\label{fig-RN2}In the Reissner–Nordstrom geometry, the region which is exponentially close to the inner horizon in area radius (shown in red) has the property that any Cauchy slice intersecting it necessarily has exponentially large extrinsic curvature somewhere. In the classical geometry, the generators of $H^+(B)$ come to an end on a sphere $\mu$ inside the red region. Therefore, a Cauchy slice (shown in black) which is nowhere to the past of $H^+(B)$ has exponentially large extrinsic curvature.}
\end{center}
\end{figure}

We do not expect that any nice slice containing $R$ reaches the future of $\mu$. As evidence for this, consider a constant $r$ slice which crosses $\mu$ or its future. Its extrinsic curvature satisfies
\begin{align}
    K \gtrsim \left[\frac{\left(\frac{r_+}{r_-}-1\right)}{L~r_-}\right]^{\frac{1}{2}} \exp{\left(\frac{\alpha r_+^4}{2r_-^2 G}\right)}~.
\end{align}

There is a second, seemingly independent reason why the semiclassical geometry may not contain $\mu$, and hence will satisfy the prediction of a singularity by Theorem~\ref{qst}. Quantum corrections to the matter stress tensor are known to become important near the inner horizon for a generic set of black hole parameters~\cite{Birrell:1978th, Sela:2018xko, Hollands:2019whz, Zilberman:2019buh, Zilberman:2022aum}.\footnote{For rotating BTZ, the stress tensor at leading order can be regular at the inner horizon~\cite{Dias:2019ery}. However, it has been argued that subleading corrections lead to a divergence~\cite{Emparan:2020rnp}.} For a simple toy model in which this can be shown, consider conformal matter in a 1+1 dimensional Reissner-Nordstrom background. The metric is
\begin{align}\label{eq-2dRN}
    ds^2 = -f(r)\, du\, dv~,
\end{align}
with $u=t-r^*$ and $v=t+r^*$ where $dr^* = dr/f(r)$. Setting the infalling flux to zero, the trace anomaly and conservation of the stress tensor imply:
\begin{align}\label{eq-RNT}
\langle T_{\mu\nu}k^\mu k^\nu \rangle \sim c \frac{\kappa_{-}^2-\kappa_{+}^2}{r^2}~,
\end{align}
where $k^\mu=\partial_v$, $\kappa_{+}$ and $\kappa_{-}$ denote the outer and inner horizon surface gravities and $c$ denotes the central charge of the CFT. Here we have added powers of $r$ by dimensional analysis to turn \eqref{eq-RNT} into an equation for 3+1 dimensions. A detailed derivation of the stress tensor can be found in \cite{Zilberman:2019buh} where it is shown that the coefficients in Eq.~\eqref{eq-RNT} are more complicated. We assume here that Eq.~\eqref{eq-RNT} is valid qualitatively for all $d$, though the detailed dependence on $r_\pm$ may be more complicated.

The stress tensor given in Eq.~\eqref{eq-RNT} seems regular because in $(u,v)$ coordinates the inner horizon is at infinity. In coordinates which are regular in a neighborhood of the inner horizon, the stress tensor can be shown to diverge. Here it is important to show that Eq.~\eqref{eq-RNT} will cause a large deviation of the metric from the classical geometry in a neighborhood of the inner horizon. The location $r$ at which the geometry gets $O(1)$ corrections from this stress tensor can be estimated by inspecting Raychaudhuri's equation for a spherically symmetric congruence near the inner horizon:
\begin{align}
\partial_v \theta_v = \kappa_v \theta_{v}-\frac{\theta_v^2}{d-1} - 8 \pi G  \langle T_{vv} \rangle
\end{align}
where $\kappa_v$ is the inaffinity. The quantum stress tensor becomes comparable to the other terms for $r-r_-\lesssim r_{\text{max}}$ where
\begin{align}
    r_{\text{max}}-r_-\sim \sqrt{G}
\end{align}
Therefore, it is clear that the generators of $H^+(B)$ exit the region which is well-approximated by the classical solution \eqref{eq-RNmetric}. To further understand the nature of the incompleteness of $H^+(B)$ would require knowledge of the correct geometry which is beyond the scope of this work. A natural guess would be that the geometry terminates at a spacelike singularity, directly upholding theorem~\ref{qst}. See Fig. \ref{fig-RN3_4}.

\begin{figure}
\begin{center}
\includegraphics[width=0.85\textwidth]{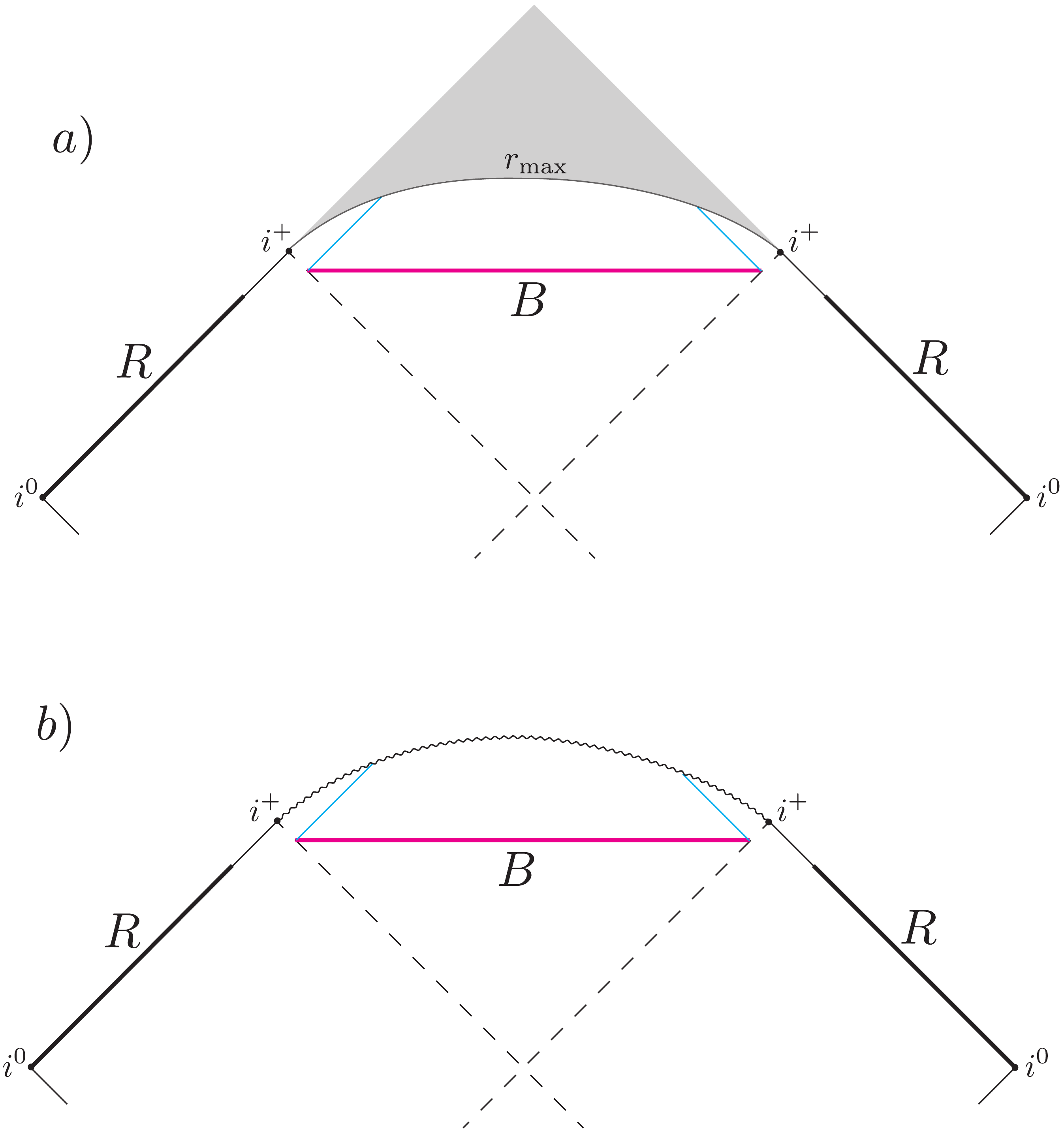}
\caption{\label{fig-RN3_4}(a) Generically the expectation value of the quantum stress tensor diverges near the inner horizon, causing large backreaction (grey).
The generators of $H^+(B)$ encounter this region, and our singularity theorem predicts that they are incomplete. The quantum-corrected geometry is not known, but the singularity theorem predicts that it either fails to admit nice slices or has a curvature singularity as shown in (b).}
\end{center}
\end{figure}

Our analysis has shown that quantum singularities blurs the line between an ordinary singularities and a Cauchy horizon. Even if quantum backreaction does not create an ordinary singularity, a quantum singularity forms before a Cauchy horizon can be reached. This may have some bearing on the strong cosmic censorship hypothesis, that physically reasonable spacetimes are globally hyperbolic. Strong cosmic censorship appears to be violated in the final stages of black hole evaporation and of the Gregory-Laflamme instability. These violations are in some sense small~\cite{Emparan:2020vyf} and should perhaps be ignored. The classical Kerr-Newman solution is of greater concern.
The fact that our theorem treats quantum singularities on the same footing as classical singularities encourages us to think of the evolution near the inner horizon as becoming singular, since no nice slices are available. We should treat this quantum singularity no differently than a classical one. Hence we need not rely on arguments that the backreaction from quantum effects would invalidate the classical Kerr-Newman solution near the inner horizon. Let $N$ be the past neighborhood of the inner horizon that cannot be reached by nice global Cauchy slices. We should treat $N$ the same as the small past neighborhood $N$ of the Schwarzschild singularity in which scalar curvature invariants approach or exceed the Planck scale: $N$ should not be part of the physical spacetime, the semiclassical geometric description terminates at the past boundary of $N$, and any geometric extensions of $M-N$ that we could consider are physically meaningless.

\section*{Acknowledgements}
We are grateful to Aidan Chatwin-Davies, Chitraang Murdia, Andrew Rolph, and Edgar Shaghoulian for their comments on Ref.~\cite{Bousso:2021sji}, which encouraged the present study. We also thank Ven Chandrasekaran, Tom Hartman, Adam Levine, Raghu Mahajan, Leonard Susskind, Marija Tomasevic, Aron Wall, Zhenbin Yang, and especially Juan Maldacena for useful discussions. We further thank Alexander Frenkel, Andrew Rolph, and Michelle Xu for comments on the first version of this paper. This work was supported in part by the Berkeley Center for Theoretical Physics; by the Department of Energy, Office of Science, Office of High Energy Physics under QuantISED Award DE-SC0019380 and under contract DE-AC02-05CH11231; and by the National Science Foundation under Award Numbers 2112880 (RB) and 2014215 (ASM).

\bibliographystyle{JHEP}
\bibliography{main}
\end{document}